\theoremstyle{plain}
\newtheorem{theorem}{Theorem}
\newtheorem{lemma}[theorem]{Lemma}
\theoremstyle{definition}
\newtheorem{definition}[theorem]{Definition}
\newtheorem{example}[theorem]{Example}
\setlist[itemize]{label=--}
\setlist[enumerate]{label=(\arabic*),labelindent=\parindent,leftmargin=*}
\DeclarePairedDelimiter\braces{\{}{\}}
\NewDocumentCommand\set{O{}mg}{\ensuremath{\braces[#1]{#2\IfNoValueTF{#3}{}{\,:\,#3}}}}
\DeclarePairedDelimiter{\ceil}{\lceil}{\rceil}
\DeclareMathOperator{\dist}{dist}
\newcommand{\Pow}{\mathcal{P}} 
\newcommand{\N}{\mathbb{N}} 
\newcommand{\Np}{\mathbb{N_+}} 
\newcommand{\local}{\mathsf{LOCAL}}
\newcommand{\ThueMorse}{\mathsf{ThueMorse}}
\newcommand{\yes}{\mathsf{yes}}
\newcommand{\no}{\mathsf{no}}
\newcommand{\A}{\mathcal{A}} 
\renewcommand{\_}{\underline{\phantom{0}}} 
\begin{document}

\title{\textbf{Constant Space and Non-Constant Time\\in Distributed Computing}}
\author{Tuomo Lempiäinen and Jukka Suomela\bigskip\\
  \small{Department of Computer Science, Aalto University, Finland}}
\date{}
\maketitle

\begin{abstract}
  While the relationship of time and space is an established topic in traditional centralised complexity theory, this is not the case in distributed computing. We aim to remedy this by studying the time and space complexity of algorithms in a weak message-passing model of distributed computing. While a constant number of communication rounds implies a constant number of states visited during the execution, the other direction is not clear at all. We consider several graph families and show that indeed, there exist non-trivial graph problems that are solvable by constant-space algorithms but that require a non-constant running time. This provides us with a new complexity class for distributed computing and raises interesting questions about the existence of further combinations of time and space complexity.
\end{abstract}

\section{Introduction}

This work studies the relationship between time and space complexity in distributed computing. While the relationships between various time complexity classes have been studied a lot in many of the established works of the field~\cite{goos13local-approximation,kuhn16local,linial92locality,naor95what} and also very recently~\cite{brandt2016lower,brandt2017lcl,chang2016exponential,chang2017time,ghaffari2017complexity}, space complexity has not received that much attention. We aim to remedy that by establishing the existence of a new complexity class: constant-space with non-constant running time.

\subsection{Distributed computing}

We consider a model of computation where each node of a graph is a computational unit. The same graph is a communication network and an input to the algorithm. Adjacent nodes communicate with each other in synchronous rounds, and eventually each node outputs its own part of the output. All the nodes run the same deterministic algorithm. In our model, the nodes are anonymous, that is, they do not have access to unique identifiers, and furthermore, they cannot distinguish between their neighbours---they broadcast the same message to everyone and they receive the messages in a set.

We define running time as the number of communication rounds until all the nodes have halted, while space usage is defined as the number of bits per node needed to represent all the states of the algorithm. The complexity measures are considered as a function of $n$, the number of nodes in the graph. The amount of local computation is not limited in any way, nor is the size of messages. More formally, distributed algorithms can be defined as state machines; see Section~\ref{sec:model} for the definitions.

\subsection{Contributions}

We prove the existence of graph problems, and consequently algorithms, that exhibit a constant space usage but non-constant running time. Our first theorem, proved in Section~\ref{sec:degree3} shows that when considering a graph family containing graphs of maximum degree 3, there exists a graph problem solvable by an algorithm with a constant number of states, while the time complexity of the problem is in $\Omega(\log n)$.

In our main theorem presented in Section~\ref{sec:degree2}, we turn our attention to the case of the class of graphs of maximum degree 2. While more non-trivial, it turns out that also in this case we can establish the existence of a graph problem solvable in constant space but requiring a linear number of communication rounds. Our result makes use of the properties of the Thue--Morse sequence, which we introduce in Section~\ref{sec:thuemorse}.

We emphasise that while it is straightforward to come up with such problems if we have a promise that the input graph is for example a path, we do not need to make such assumptions. It is also essential that we require the algorithms to halt in all nodes of any input graph. Again, this kind of results would be easy to achieve, if nodes were allowed to continue running indefinitely on some input graphs.

\subsection{Motivation and related work}

Traditionally, the focus of distributed computing has been on the time complexity of algorithms, and occasionally on message complexity, but space complexity has not been studied much. We aim to change this by introducing space complexity, constant space in particular, as a new dimension in the classification of computational problems.

While space complexity is especially interesting from the theory perspective, it can be argued that constant space complexity is a reasonable assumption also from the practical point of view. Networked devices often need to be able to operate in a network of any size, while it is not necessarily easy to expand their memory resources afterwards. On the other hand, also nature provides us with plenty of phenomena that exhibit distributed behaviour. Natural organisms are usually of constant size, independent of the size of the swarm or flock they are part of. Therefore, the study of constant space may result in new avenues for applying distributed computing to advance the understanding of nature.

The line of research on anonymous models of distributed computing was initiated by Angluin~\cite{angluin80local}, who introduced the well-known port numbering model. Our model of computation can be seen as a further restriction of the port-numbering model, with port numbers stripped out. While port numbers are a natural assumption in wired networks, the weaker variant makes more sense when applying distributed computing to a wireless setting. While our model has not been studied that much in prior work, a so-called \emph{beeping model}~\cite{afek11beeping,cornejo10deploying} is essentially similar. In the hierarchy of seven models defined by Hella et al.~\cite{hella15weak-models}, our model is the weakest one; they call it the $\mathsf{SB}$ model. On the other hand, the case where nodes receive messages in a multiset instead of a set is discussed more often in the literature~\cite{hella15weak-models}.

From the constant-space point of view, our setting bears similarities to the field of cellular automata~\cite{gardner1970fantastic,neumann1966theory,wolfram2002new}. In a cellular automaton, each cell can be in one of a constant number of states, and each cell updates its state synchronously using the same rule. However, in the case of cellular automata, one is usually interested in the kind of patterns an automata converges in, while we require each node to eventually stop and produce an output.

On the side of distributed computing, Emek and Wattenhofer~\cite{emek13stone} have considered a model where the network consists of finite state machines---hence making the space complexity constant. However, their model is asynchronous and randomised, while we study a fully synchronous and deterministic setting.

One of the main ingredients of our work, the Thue--Morse sequence, was used previously by Kuusisto~\cite{kuusisto14infinite}, who proved that there exists a distributed algorithm that always halts in the class of graphs of maximum degree two but features a non-constant running time. However, his algorithm has also a non-constant space complexity---this is where our work provides a significant improvement.

\section{Preliminaries}\label{sec:prelim}

In this section we define our model of computation and introduce notions needed later in the proofs.

\subsection{Model of computation}\label{sec:model}

Our model of computation is a weaker variant of the standard $\local$ model, with the following restrictions:
\begin{itemize}
  \item Nodes are anonymous, that is, they do not have unique identifiers.
  \item Nodes broadcast the same message to all their neighbours.
  \item Nodes receive the messages in a set, that is, they do not know which neighbour sent which message or how many identical messages they received.
\end{itemize}

In the following, we define distributed algorithms as state machines. Given an input graph, each node of the graph is equipped with an identical state machine. Machines in adjacent nodes can communicate with each other. In this work, we study only deterministic state machines and synchronous communication. The graph is always assumed to be simple, finite, connected and undirected, unless stated otherwise.

At the beginning, each state machine is only aware of its own local input (taken from some fixed finite set) and the degree of the node on which it sits. Then, computation is executed in synchronous rounds. In each round, each machine
\begin{enumerate}
  \item broadcasts a message to its neighbours,
  \item receives a set of messages from its neighbours,
  \item moves to a new state based on the received messages and its previous state.
\end{enumerate}
Each machine is required to eventually reach one of special halting states and stop execution. The local output is then the state of the node at the time of halting.

Note that while our model of computation is rather weak, it only makes our results stronger by limiting the capabilities of algorithms. Unique identifiers or unrestricted local inputs would not make much sense in the constant-space setting. Crucially, we require nodes to always halt in any input graph---nodes being allowed to run indefinitely or having the ability to continue passing messages after announcing an output would make engineering constant-space non-constant-time problems quite straightforward.

Next, we give a more formal definition of the model of computation used in this work by defining algorithms and graph problems.

\subsection{Notation and terminology}

For $k \in \N$, we denote by $[k]$ the set $\set{1,2,\ldots,k}$. Given a graph~$G=(V,E)$, the set of neighbours of a node $v \in V$ is denoted by $N(v) = \set{u\in V}{\set{v,u} \in E}$. For $r \in \N$, the \emph{radius-$r$ neighbourhood} of a node~$v\in V$ is $\set{u\in V}{\dist(u,v) \le r}$, that is, the set of nodes~$u$ such that there is a path of length at most $r$ between $v$ and $u$. We call any induced subgraph of $G$ that contains node~$v$ simply a \emph{neighbourhood} of $v$.

We will also work with strings of letters. Given a finite alphabet~$\Sigma$, we denote elements of the alphabet by lowercase symbols such as $x \in \Sigma$ or $y \in \Sigma$. On the other hand, \emph{words} (finite sequences of letters) are denoted by uppercase symbols, for example $X = x_1x_2\ldots x_i \in \Sigma^*$. Given words $X = x_1x_2\ldots x_i$ and $Y = y_1y_2\ldots y_j$, we write their concatenation simply $XY = x_1x_2\ldots x_i y_1y_2\ldots y_j$. A word~$X$ is a \emph{subword} of $Y$ if $Y = Y_1XY_2$ for some (possibly empty) words $Y_1$ and $Y_2$. We identify words of length 1 with the letter they consist of. For any letter $x$ and $i \in \N$, $x^i$ denotes the word consisting of $i$ consecutive letters $x$. We say that a word $X$ is \emph{of the form} $x+$ if $X = x^i$ for some $i \in \Np$.

\subsection{Algorithms as state machines}

Let $G = (V,E)$ be a graph. An \emph{input} for $G$ is a function $f\colon V\to I$, where $I$ is a finite set. For each node $v \in V$, we call $f(v) \in I$ the \emph{local input} of $v$.

A \emph{distributed state machine} is a tuple $\A = (S,H,\sigma_0,M,\mu,\sigma)$, where
\begin{itemize}
  \item $S$ is a set of states,
  \item $H \subseteq S$ is a finite set of halting states,
  \item $\sigma_0 \colon \N \times I \to S$ is an initialisation function,
  \item $M$ is a set of possible messages,
  \item $\mu \colon S \to M$ is a function that constructs the outgoing messages,
  \item $\sigma \colon S \times \Pow(M)$ is a function that defines the state transitions, so that $\sigma(h,\mathcal{M}) = h$ for each $h \in H$ and $\mathcal{M} \in \Pow(M)$.
\end{itemize}

Given a graph $G$, and input $f$ for $G$ and a distributed state machine~$\A$, the \emph{execution} of $\A$ on $(G,f)$ is defined as follows. The state of the system in round $r \in \N$ is a function $x_r \colon V \to S$, where $x_r(v)$ is the state of node $v$ in round~$r$. To begin the execution, set $x_0(v) = \sigma_o(\deg(v),f(v))$ for each node~$v \in V$. Then, let $A_{r+1}(v) = \set{\mu(x_r(u))}{u \in N(v)}$ denote the set of messages received by node~$v$ in round~$r+1$. Now the new state of each node~$v\in V$ is defined by setting $x_{r+1}(v) = \sigma(x_r(v),A_{r+1}(v))$.

The \emph{running time} of $\A$ on $(G,f)$ is the smallest $t\in \N$ for which $x_t(v) \in H$ holds for all $v \in V$. The output of $\A$ on $(G,f)$ is then $x_t\colon V\to H$, where $t$ is the running time, and for each $v\in V$, the \emph{local output} of $v$ is $x_t(v)$. The \emph{space usage} of $\A$ on $(G,f)$ is defined as
\[
  \ceil{\log |\set{s \in S}{s = x_r(v)\text{ for some } r\in\N, v\in V}|},
\]that is, the number if bits needed to encode all the states that the state machine visits in at least one node of $G$ during the execution. In case the execution does not halt, the running time can be defined to be $\infty$, and if the number of visited states grows arbitrarily large, we take the space usage to also be $\infty$.

From now on, we will use the terms \emph{algorithm~$\A$} and \emph{distributed state machine~$\A$} interchangeably, implying that each algorithm can be defined formally as a state machine.

\subsection{Graph problems}

The computational problems that we consider are graph problems with local input---that is, the problem instance is identical to the communication network, but possibly with an additional input value given to each node.

More formally, let $I$ and $O$ be finite sets. A \emph{graph problem} is a mapping $\Pi_{I,O}$ that maps each graph~$G=(V,E)$ and input $f \colon V \to I$ to a set $\Pi_{I,O}(G,f)$ of valid solutions. Each solution~$S$ is a function $S\colon V \to O$. In case of \emph{decision graph problems}, we set $O = \set{\yes,\no}$.

If $\Pi_{I,O}$ is a graph problem, $T,U\colon \N \to \N$ are functions and $\A$ a distributed state machine, we say that \emph{$\A$ solves $\Pi_{I,O}$ in time $T$ and in space $U$} if for each graph $G=(V,E)$ and each input $f\colon V\to I$ we have that the running time of $\A$ on $(G,f)$ is at most $T(|V|)$, the space usage of $\A$ on $(G,f)$ is at most $U(|V|)$ and the output of $\A$ on $(G,f)$ is in the set $\Pi_{I,O}(G,f)$. In that case, we also say that the \emph{time complexity} of algorithm~$\A$ is $T$ and the \emph{space complexity} of $\A$ is~$U$.

\subsection{Thue--Morse sequence}\label{sec:thuemorse}

In this section we present a concept that will be central in the proof of our main result in Section~\ref{sec:degree2}. The Thue--Morse sequence is the infinite binary sequence defined recursively as follows:
\begin{definition}
  The \emph{Thue--Morse} sequence is the sequence $(t_i)$ satisfying $t_0 = 0$, and for each $i \in \N$, $t_{2i} = t_i$ and $t_{2i+1} = 1-t_i$.
\end{definition}
Thus, the beginning of the Thue--Morse sequence is
\[
  0110 1001 1001 0110 1001 \ldots
\]
For our purposes, the following two recursive definitions will be very useful.
\begin{definition}
  Let $T_0 = 0$. For each $i \in \Np$, let $T_i = T_{i-1}\mathcal{C}(T_{i-1})$, where $\mathcal{C}$ denotes the Boolean complement.
  \label{def:tm-compl}
\end{definition}
Note that for each $i \in \N$, the word $T_i$ the prefix of length $2^i$ of the Thue--Morse sequence.
\begin{definition}
  Let $T'_0 = 0$. For each $i \in \Np$, let $T'_i$ be obtained from $T'_{i-1}$ by substituting each occurrence of 0 with 01 and each occurrence of 1 with 10.
  \label{def:tm-rec}
\end{definition}
Again, we have $T'_0 = 0$, $T'_1 = 01$, $T'_2 = 0110$, $T'_3 = 01101001$ and so on. A straightforward induction shows that the above two definitions are equivalent: $T_i = T'_i$ for each $i \in \N$. We call $T_i$ the \emph{Thue--Morse word of length $2^i$}.

The Thue--Morse sequence contains lots of squares, that is, subwords of the form $XX$, where $X \in \set{0,1}^*$. Interestingly, it does not contain any cubes---subwords of the form $XXX$. Note also that for each $i \in \Np$, $T_{2i}$ is a palindrome.

\section{Warm-up: graphs of maximum degree 3}\label{sec:degree3}

In this section we present a graph problem that exhibits the constant space and non-constant time complexity, in case we do not restrict ourselves to paths and cycles. The proof is quite straightforward, which emphasises the fact that the degree-2 case considered in Section~\ref{sec:degree2} is the most interesting one. We emphasise that in the following theorem, we do not need to make any additional assumptions about the graph; the described algorithm halts in all finite input graphs.

\begin{theorem}
  There exist a graph problem with constant space complexity and $\Theta(\log n)$ time complexity in the class of graphs of maximum degree at most 3.
  \label{thm:degree3}
\end{theorem}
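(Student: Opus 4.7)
\medskip

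\noindent\textbf{Proof plan.} The driving observation is that the class of max-degree-$3$ graphs is rich enough to contain graphs of diameter $\Theta(\log n)$ (e.g., complete binary trees), yet still restrictive enough that a constant-state machine can usefully propagate a single bit of information along the diameter. The proof therefore amounts to exhibiting a problem whose answer is forced to depend on information from a $\Theta(\log n)$-radius neighbourhood on the hard inputs, together with a constant-space flooding/peeling algorithm.

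The plan is to take as hard instances the family of complete rooted binary trees, in which each internal non-root node has degree $3$, the root has degree $2$, and all leaves are at the same depth $d \approx \log_2 n$. A natural candidate for $\Pi$ is the decision problem ``is the input graph a complete binary tree?'', or a variant where each leaf carries a local input from a finite set and must output a function of the opposite leaf's input; either way, the output at a leaf is determined by information on the far side of a diameter-$\Theta(\log n)$ graph. The algorithm is a constant-state leaf-peeling procedure: every degree-$1$ node starts in a ``peeled'' state, and a node switches to ``peeled'' once it has heard the peeled signal from all but one of its neighbours, so a standard balanced-tree computation terminates in $\Theta(\log n)$ rounds. Any local inconsistency (wrong degree pattern, more than one unpeeled neighbour at the wrong moment, etc.) immediately triggers a halting ``no'' state, so on all non-tree inputs the algorithm terminates in $O(1)$ rounds, and on tree inputs it terminates in $O(\mathrm{depth})$ rounds. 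The only way the algorithm can fail to terminate quickly is on inputs that genuinely look like a balanced binary tree up to logarithmic radius; care in the definition of $\Pi$ is used to make sure those inputs have $n$ satisfying $\mathrm{depth} = O(\log n)$.

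For the $\Omega(\log n)$ lower bound, I would use the standard indistinguishability argument in the $\mathsf{SB}$ model: take two complete binary trees $B_k^{(0)}$ and $B_k^{(1)}$ on the same number of nodes that differ only in the local input of their root (or in a small gadget near the root). A leaf in $B_k^{(0)}$ and the corresponding leaf in $B_k^{(1)}$ have identical radius-$r$ neighbourhoods for every $r < k = \Theta(\log n)$, and hence any deterministic anonymous algorithm produces the same state sequence at both leaves for fewer than $k$ rounds. Since the two leaves must produce different outputs, at least $\Omega(\log n)$ rounds are required, matching the upper bound.

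The main obstacle is the joint requirement of constant space \emph{and} global halting on every max-degree-$3$ graph: the algorithm must decide quickly on long paths, cycles, and graphs that contain long degree-$2$ segments attached to tree-like parts, without any ability to count up to $\log n$. This is handled by choosing $\Pi$ so that its correct output on any input with a long degree-$2$ corridor or a cycle-like symmetry is fixed (and locally checkable in $O(1)$ rounds), while only the balanced-tree-like inputs can force the machine to run for $\Theta(\log n)$ rounds before a halting state is reached.
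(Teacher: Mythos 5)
Your high-level plan---balanced binary trees as the hard instances, information flowing from the leaves towards the root in constant state, and an indistinguishability argument for the $\Omega(\log n)$ lower bound---matches the spirit of the paper's proof, and your lower bound sketch is essentially the one the paper uses. However, there are two concrete gaps. First, the peeling rule ``switch to peeled once all but one neighbour is peeled'' is not implementable in this model: messages arrive as a \emph{set}, so a degree-$3$ node cannot distinguish one unpeeled neighbour from two. The paper works around exactly this by starting from a \emph{directed} graph and replacing each edge with a five-node gadget that encodes the edge's orientation, so that a node can locally verify that it has indegree $0$ or $2$ and outdegree at most $1$.

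Second, and more importantly, you defer the crux of the argument---halting within $O(\log n)$ rounds on \emph{every} graph of maximum degree at most $3$---to ``care in the definition of $\Pi$''. This is precisely where the work lies. Your peeling procedure runs for $\Theta(n)$ rounds on, for example, a binary caterpillar (a spine of degree-$3$ nodes, each with a pendant leaf), and more elaborate full-but-unbalanced binary trees are consistent with complete binary trees on every constant-radius ball, so they cannot all be rejected by $O(1)$-round local checks; recognising complete binary trees exactly is therefore much more than you need, and you do not show it is achievable in $O(\log n)$ time with constant space. The paper sidesteps this with two moves: (i) the problem only requires a meaningful output (the parity of the distance to the nearest leaf) on ``good'' instances and permits arbitrary output, though still mandatory halting, elsewhere; and (ii) goodness is defined through the orientation gadgets so that it is checkable in $O(1)$ rounds, and any locally good graph is a binary pseudotree whose in-neighbourhoods grow exponentially, hence every node is within $O(\log n)$ of a leaf and the leaf-to-root parity propagation halts everywhere in $O(\log n)$ rounds. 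Without an analogue of these two steps your construction does not yet yield a correct $O(\log n)$-time, $O(1)$-space algorithm.
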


\begin{proof}
  Consider the following transformation: given a \emph{directed} graph $G' = (V',E')$, replace each edge $e=(u,v)$ by the gadget represented in Figure~\ref{fig:gadget}. The end result is an undirected graph $G = (V,E)$, where the gadgets encode the edge directions of the original graph. We say that an undirected graph~$G$ is \emph{good} if it is obtained from some directed binary pseudotree~$G'$ (that is, a connected graph where each node has outdegree at most one and indegree either 0 or 2) by the above transformation.
  \begin{figure}
    \centering
    \includegraphics[page=1]{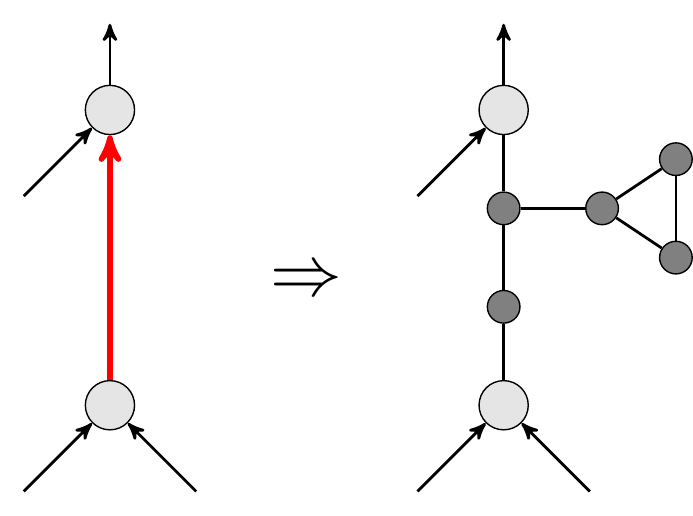}
    \caption{The red edge is replaced by the dark grey gadget consisting of five nodes. Similar encoding is performed to each edge of the original directed graph.}
    \label{fig:gadget}
  \end{figure}

  Consider the following graph problem: if the input graph is good, each node has to output the parity of its distance to the nearest leaf node; if not, each node is allowed to output anything. In any case, all the nodes have to halt.

  Define an algorithm~$\A$ as follows. First, $\A$ checks locally whether the input graph~$G$ is good: Each node can detect in at most four communication rounds, whether it is contained in a valid gadget. Then, each node~$v$ corresponding to a node $v'$ from the original directed graph can check that in the orientation obtained from the gadgets, the node either has two incoming edges and one outgoing edge, only one outgoing edge (a leaf node), or only two incoming edges (the root node in case of a tree). If either of the checks fails, the node broadcasts an instruction to halt to its neighbours and halts. Otherwise, it follows that the original graph~$G'$ is indeed a directed binary pseudotree with each edge directed away from the leaf nodes.

  After verifying the goodness of the instance, algorithm~$\A$ starts to count the distances to leaf nodes. In the first round, each leaf node broadcasts message 0 to its neighbours, halts and outputs 0. When a node that has not yet halted receives message $i$, where $i \in \set{0,1}$, it broadcasts $i+1 \mod 2$ to its neighbours, halts and outputs $i+1 \mod 2$. Because the input graph is assumed to be finite and connected, each node eventually halts.

  Since verifying the goodness takes only a constant number of rounds, and in the counting phase we count only up to 2, algorithm~$\A$ needs only a constant number of states. Furthermore, as in any binary pseudotree the distance from any node to the closest leaf node is at most $O(\log n)$, each node halts after at most $O(\log n)$ rounds. On the other hand, in a balanced binary tree, the distance from the root to leaf nodes is $\Omega(\log n)$, and thus $\Omega(\log n)$ rounds are needed until all the nodes can output the parity of their distance to the closest leaf. This completes the proof.
\end{proof}

\section{Graphs of maximum degree 2}\label{sec:degree2}

We are now ready to state the main theorem of this work.

\begin{theorem}
  There exist a decision graph problem with constant space complexity and $\Theta(n)$ time complexity in the class of graphs of maximum degree at most 2.
  \label{thm:degree2}
\end{theorem}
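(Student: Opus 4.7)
The plan is to introduce a decision problem on graphs of maximum degree~2 whose yes-instances are precisely paths encoding Thue--Morse words, and to separate upper and lower bounds by combining a hierarchical constant-state algorithm with a standard indistinguishability argument. Concretely, I would take the local input set to be $I=\{0,1,\ell,r\}$ and declare a pair $(G,f)$ to be a yes-instance iff $G$ is a path, one endpoint carries the marker $\ell$, the other endpoint carries $r$, and the internal nodes' labels read from $\ell$ to $r$ form the word $T_i$ for some $i\in\N$. All other inputs---every cycle, every path with misplaced markers, and every path whose internal labels do not spell some $T_i$---are no-instances.

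For the upper bound my algorithm exploits the substitution recursion of Definition~\ref{def:tm-rec}: $T_i$ arises from $T_{i-1}$ by $0\mapsto 01$, $1\mapsto 10$. A wave launched from the $\ell$-endpoint alternately marks nodes as \emph{representatives} and \emph{relays}. Each representative reads the label of its right-hand relay, checks that the pair is $01$ or $10$, and stores a virtual letter ($0$ or $1$ respectively); any other pair spawns a rejection wave. A subsequent wave treats the representatives as a virtual path of half the length: at level~$k$ consecutive representatives communicate at stride $2^k$ through $2^k$ relay rounds, performing the same substitution check on their virtual letters, and so on. A single level costs $O(n)$ in isolation, but by pipelining the successive levels spatially---each level starting in a region as soon as the previous has swept through it---the total running time telescopes into $O(n)$. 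Every node stores only a bounded amount of local information (representative vs.\ relay, current virtual letter, type and direction of any passing wave), so the space complexity is $O(1)$.

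For the lower bound I would use the standard anonymous indistinguishability argument. Suppose some algorithm $\A$ decides the problem in time $t(n)=o(n)$. For large $i$ let $P$ be the path of length $2^i+2$ with endpoint markers $\ell,r$ and internal labels spelling $T_i$, so that $P$ is a yes-instance. Let $P'$ be obtained from $P$ by flipping the single internal label at the middle of the path; since $P'$ has length $2^i$ inside its markers and its internal word differs from $T_i$ in exactly one position, it is not a Thue--Morse word and hence a no-instance. Every node of $P'$ whose radius-$t(n)$ neighbourhood avoids the flipped label has an execution history identical to that of the corresponding node in $P$, and therefore produces the same output, namely \emph{yes}. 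When $t(n)=o(n)$ such nodes certainly exist near the endpoints of $P'$, contradicting the requirement that $P'$ be rejected. Hence $t(n)=\Omega(n)$.

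The main obstacle, I expect, is turning the informal ``level-by-level substitution'' description into a genuinely constant-state machine that is also correct on arbitrary no-instances---in particular on cycles and on paths whose labels are wildly non-Thue--Morse. Here the cube-freeness of the Thue--Morse sequence is essential: any deviation from the recursive substitution pattern must manifest either as a locally forbidden pair or, within a constant number of reduction levels, as a cube in some virtual subword, so a finite alphabet of representative, relay, wavefront and rejection symbols suffices to detect and broadcast failure everywhere within the $O(n)$ time budget. Ensuring that rejection waves, ongoing substitution waves and halting signals interact cleanly on every possible input is the detailed engineering work that the full proof must carry out.
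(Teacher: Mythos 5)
Your problem definition, your use of the Thue--Morse substitution structure, and your lower-bound argument are all in the spirit of the paper (the paper's own lower bound is essentially your indistinguishability argument, stated more tersely). The genuine gap is in the upper bound, and it is not the ``detailed engineering'' you defer to the end --- it is a structural flaw in the wave-based design. Your algorithm is driven by a wave launched from the $\ell$-endpoint. Consider a cycle on which the labels are arranged so that every radius-$O(1)$ neighbourhood is locally indistinguishable from the interior of a legal path instance (no endpoint marker anywhere, every local window consistent with some $T_i$). No node can reject locally, no endpoint exists to launch a wave, and in an anonymous constant-space model a node cannot count rounds to time out. Hence no node ever halts, violating the requirement --- which the paper explicitly flags as essential --- that the algorithm halt on \emph{every} input graph. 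The same failure occurs on a path whose $\ell$-end is locally broken: the rejection originates near one end and needs $\Theta(n)$ rounds to propagate, but nodes at the far end must not be allowed to wait passively for a wave that may never have been launched. This is precisely why the paper's algorithm is \emph{locally driven}: every node actively participates in every phase, gathering a constant-size compressed view of its neighbourhood and applying the \emph{inverse} substitution (merging blocks, shrinking the virtual word from length $2^k$ to $2^{k-2}$ per phase); on a cycle the blocks grow until at most four remain, at which point the pattern match fails everywhere and every node rejects. Guaranteeing termination on all inputs is the heart of the construction, not an afterthought.

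Two secondary concerns. First, your pipelining claim (that $\log n$ stride-$2^k$ levels telescope to $O(n)$ total time) is plausible but unproven, and the naive version gives $O(n\log n)$; the paper sidesteps this because its phase $k$ costs $O(4^k)$ rounds and the geometric sum is $O(n)$ outright. Second, you should verify that a node in your hierarchy never needs to remember \emph{which} level it was demoted at (only whether it is still active), since storing a level index would cost $\omega(1)$ bits. Neither of these is fatal, but the halting issue on cycles is, and fixing it essentially forces you back to a symmetric, every-node-acts design of the kind the paper uses.
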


To define our graph problem, we will make use of the Thue--Morse sequence via the following definition---compare this to Definition~\ref{def:tm-rec} earlier.
\begin{definition}[Valid words]
  Define a set of words over $\set{0,1,\_}$ recursively as follows:
  \begin{enumerate}
    \item $\_0\_$ is \emph{valid},
    \item if $X$ is valid and $Y$ is obtained from $X$ by applying the substitutions $0 \mapsto 0\_1\_1\_0$ and $1 \mapsto 1\_0\_0\_1$ to each occurrence of $0$ and $1$, then $Y$ is \emph{valid}.
  \end{enumerate}
\end{definition}

That is, valid words are obtained from the sequences $T_{2i}, i \in \N$, by inserting an underscore at the beginning, in between each symbol and at the end.

Now, we will define a decision graph problem that we call $\ThueMorse$ as follows.

\begin{definition}[$\ThueMorse$]
  The local inputs of the problem are taken from the set $I = \set{\alpha,\beta,\gamma} \times \set{0,1,\_}$ and each local output is either $\yes$ or $\no$. Given an input function $f \colon V \to I$, we write $f = (f_1,f_2)$. Now, an instance is a yes-instance of $\ThueMorse$ if and only if
  \begin{itemize}
    \item the graph is a path graph,
    \item the first parts $f_1(\cdot)$ of the local inputs define a consistent orientation,
    \item the second parts $f_2(\cdot)$ of the local input define a valid word.
  \end{itemize}
\end{definition}

\subsection{Definition of the algorithm}

Next, we will give an algorithm that is able to solve the decision problem $\ThueMorse$ by using only constant space. The high-level idea is as follows: First, we check that the local neighbourhood of each node looks correct. Then, we repeatedly apply substitutions that roll the configuration of the path graph back to a shorter prefix of the Thue--Morse sequence, until we reach a trivial configuration and accept the input---or fail to apply the substitutions unambiguously and consequently reject the input.

\begin{lemma}
  There exists an algorithm~$\A$ that solves problem $\ThueMorse$ in space $O(1)$ and time~$O(n)$.
  \label{lem:deg2alg}
\end{lemma}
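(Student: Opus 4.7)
My plan is to design $\A$ in two stages. Stage one is a constant-round local verification: using the orientation component $f_1$, each node checks (i) that its neighbourhood is consistent with a path (no cycles, no vertex of degree ${>}2$), (ii) that the endpoint/interior markers $\alpha, \beta, \gamma$ form a globally consistent left-to-right orientation where the two neighbours of an endpoint node have matching directional labels, and (iii) that the symbol component $f_2$ strictly alternates between $\_$ and $\set{0,1}$ with underscores at both endpoints. A node that detects a local inconsistency raises a \emph{reject} signal which propagates along the path in $O(n)$ rounds; all nodes then halt with output $\no$.

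Stage two performs iterated inverse substitution. At level $k=1,2,3,\ldots$, the algorithm collapses every block of the form $\_0\_1\_1\_0\_$ or $\_1\_0\_0\_1\_$ (where the symbols may already be level-$(k{-}1)$ representatives) into $\_0\_$ or $\_1\_$ respectively. In the distributed implementation no node is actually deleted: instead each node acquires a \emph{level-$k$ role} — it is either the central symbol-representative of its block, one of the two flanking underscore-representatives, or a passive relay for higher levels. The level-$k$ representatives form a virtual coarser path on which level $(k{+}1)$ then operates. The algorithm accepts if and only if the virtual path eventually shrinks to the trivial pattern $\_0\_$; it rejects (by broadcasting a reject signal as in Stage one) the moment some level-$k$ block fails to match either of the two allowed Thue--Morse patterns.

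The key to bounding the time by $O(n)$ is that a level-$k$ check requires communication across a block spanning $\Theta(4^k)$ physical edges, hence takes $\Theta(4^k)$ rounds. By pipelining levels — each level-$(k{+}1)$ check is triggered locally as soon as its constituent level-$k$ representatives have been identified, not after a global synchronisation — the total time telescopes to $\sum_{k=1}^{\log_4 n} O(4^k) = O(n)$. Constant space follows because at any single round each node participates in only $O(1)$ ongoing checks: the level currently being processed in its vicinity, plus perhaps an adjacent one in flight during pipelining. For each such check the needed state is a bounded window of symbols together with a bounded role marker, so the total number of distinct states per node is independent of $n$.

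The main obstacle is proving correct rejection under the constant-space constraint: a malformed input may pass Stage one and several substitution levels before failing, and we must ensure every failure is locally detected and globally propagated before any node erroneously halts with $\yes$. The crucial tool here is the cube-freeness of the Thue--Morse sequence, which makes the inverse substitution unambiguous — in a would-be valid configuration each node has exactly one possible level-$k$ role, so any deviation from the pattern forces at least one node into a configuration matching neither $\_0\_1\_1\_0\_$ nor $\_1\_0\_0\_1\_$, and Stage one's consistent orientation lets the resulting reject signal traverse the whole path within the remaining $O(n)$ budget.
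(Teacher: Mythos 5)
Your high-level plan---iterated inverse Thue--Morse substitution, with overlap-freeness providing unambiguity and a geometric sum giving the $O(n)$ bound---is the same idea as the paper's algorithm, but as written the proposal has concrete gaps. The most serious one is in Stage one: you claim a constant-round local check that the graph has ``no cycles''. In this anonymous broadcast model a long cycle is locally indistinguishable from the interior of a long path, so no constant-round (indeed, no $o(n)$-round) local test can exclude cycles. The paper's part I only verifies degrees and orientation, explicitly concedes that the graph may still be a cycle, and then handles cycles inside the substitution phase by a separate argument: the number of blocks strictly decreases each phase, so on a cycle the configuration eventually becomes too short and periodic for the matching pattern to occur, at which point every node aborts. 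Your Stage two never accepts on a cycle (no endpoints), but you give no argument that it \emph{terminates} there, and termination on all inputs is required by the model.

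Two further gaps. First, your role assignment collapses blocks matching $\_0\_1\_1\_0\_$ or $\_1\_0\_0\_1\_$, but a $4$-symbol window does not determine the parsing: $0110$ occurs in Thue--Morse words both as a genuine block and straddling two adjacent blocks (e.g.\ inside $1001\,1001$), so a node seeing only that window can assign itself the wrong role. You correctly name cube-freeness as the tool, but you must actually widen the window---the paper matches the $8$-symbol pattern $\_0\_1\_1\_0\_1\_0\_0\_1\_$ against a compressed view containing eight separators on each side and aborts whenever two match positions disagree. Second, correct rejection on paths is not purely local: a path whose labelling is everywhere a factor of the Thue--Morse sequence but is not anchored correctly or has the wrong length can survive many substitution levels without any node seeing a mismatched block. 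The paper needs a genuinely global counting argument (comparing the execution against yes-instances of the nearest sizes and using the fixed growth rate of block lengths) to rule out erroneous acceptance; your appeal to ``any deviation forces a local mismatch'' does not cover this case. Finally, note that your pipelining is unnecessary: since all blocks have equal length on any input surviving the local checks, the phases are automatically synchronous and the phase durations already form a geometric series summing to $O(n)$.
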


First, let us introduce some terminology and notation. In an instance $(G,f)$ of $\ThueMorse$ where $G$ is either a path or a cycle graph, the sequence of symbols defined by $f_2$ in a given node neighbourhood is called the \emph{input word}---note that contrary to usual words, the input word is unoriented. Sometimes we identify the node neighbourhood with the corresponding input word; the meaning should be clear from the context. During the execution of the algorithm, each node $v$ has a \emph{current symbol}~$c(v)$ as part of its state. The vertical bar $|$ shall denote the end of the path graph. We will make use of it by assuming that each degree-$1$ node~$v$ sees a ``virtual'' neighbour~$u$ with the symbol~$|$, that is, $f_1(u) = f_2(u) = |$ and $c(u)$ is always equal to $|$. The underscore symbols~$\_$ will be called \emph{separators}. Denote the alphabet which contains the possible current symbols by $\Sigma = \set{0,1,\_,|}$.

We define the algorithm~$\A$ in three parts, which we denote by I, II and III. In each part, any node can \emph{abort}, which means that it sends a special abort message to its neighbours and then moves to state $\no$ and thus halts. If a node receives the abort message at any time, it aborts---that is, it passes the message on to all its neighbours, and then moves to state $\no$ and halts. At initialisation, each node~$v$ sets its symbol~$c(v)$ to be equal to $f_2(v)$, the second part of the local input.

In {\bfseries part I}, each node~$v$ verifies its degree and the orientation: if $\deg(v) \in \set{1,2}$ and three different symbols from $\set{\alpha,\beta,\gamma,|}$ can be found in the local inputs within the radius-$1$ neighbourhood of $v$ (that is, we have $|\set{f_1(u)}{u\in N(v) \cup \set{v}}| = 3$), continue; otherwise, abort. Recall that the graph is assumed to be finite and connected. If none of the nodes aborts, it follows that the graph is either a path or a cycle and that the local inputs given by $f_1$ define a word of the form $\ldots \alpha\beta\gamma\alpha\beta\gamma\alpha\beta\gamma\alpha \ldots$.

In {\bfseries part II}, each node~$v$ verifies the input word in its radius-$1$ neighbourhood: if the neighbourhood is in $\set{|\_0,\;0\_|,\;0\_0,\;1\_1,\;0\_1,\;1\_0,\;\_0\_,\;\_1\_}$, continue; otherwise, abort. Note that here we do not care about the orientation of the word, and hence the set of symbols received from the neighbours is enough for this step. If none of the nodes aborts, the input word is \emph{locally valid}: every other symbol is the separator $\_$ and every other symbol is either 0 or 1.

Then, we proceed to {\bfseries part III}, which contains the most interesting steps of the algorithm. Now we will make use of the orientation given as part of the input. Define $\alpha < \beta$, $\beta < \gamma$ and $\gamma < \alpha$. For each node~$v$, we say that neighbour~$u$ of $v$ is the \emph{left neighbour} of $v$ if $f_1(u) < f_1(v)$ and the \emph{right neighbour} of $v$ if $f_1(u) > f_1(v)$. Thus, each node can essentially send a different message to each of its two neighbours: the orientation symbol of the recipient indicates which part of the message is intended for which recipient. From now on, we will also assume that each node attaches its own orientation symbol as part of every message sent, so that nodes can distinguish between incoming messages.

Part III will consist of several \emph{phases}. In each phase, we first gather information from the neighbourhood in two \emph{buffers} and then try to apply a substitution to the word obtained in the buffers. More precisely, each node~$v$ has two buffers, the left buffer $L(v)$ and the right buffer $R(v)$ as part of its state. The buffers are used to store a \emph{compressed} version of the input word in the neighbourhood: a sequence of consecutive symbols 0 or 1 is represented by a single 0 or 1, respectively.

Let us define some notation. Let $l,r \colon \Sigma^* \times \Sigma \to \Sigma^*$ be as follows. If $X = Ay$ for some word $A$, set $r(X,y) = X$. Otherwise, set $r(X,y) = Xy$. Function~$l$ is defined analogously: if $X = yA$ for some word $A$, set $l(X,y) = X$; otherwise, set $l(X,y) = yX$. In other words, functions $l$ and $r$ append a new symbol either to the beginning or to the end of a word, respectively---but only if the new symbol is different from the current first or last symbol of the word, respectively.

Now, in each phase, each node~$v$ first \emph{fills its buffers} as follows. To initialise, node~$v$ broadcasts message $c(v)$ (its own current symbol) to its neighbours. Then, node~$v$ repeats the following. It sets the left buffer $L(v)$ equal to the message it receives from its left neighbour and the right buffer $R(v)$ equal to the message it receives from the right neighbour. After that, node~$v$ sends $r(L(v),c(v))$ to the right and $l(R(v),c(v))$ to the left. These steps are repeated until both buffers of~$v$ contain either 8 instances of the separator~$\_$ or an end-of-the-path marker~$|$. When that happens, node~$v$ is finished with filling its buffers.

Next, node~$v$ combines its buffers to construct a \emph{compressed view} of the input word in its neighbourhood. To that end, define $C\colon \Sigma^* \times \Sigma \times \Sigma^* \to \Sigma^*$ and $p\colon \Sigma^* \times \Sigma \times \Sigma^* \to \N$ as follows. If $X = Ay$ and $Z = yB$ for some $A,B \in \Sigma^*$, set $C(X,y,Z) = AyB$ and $p(X,y,Z) = |A|$. Otherwise, if $X = Ay$ for some $A \in \Sigma^*$, set $C(X,y,Z) = AyZ$ and $p(X,y,Z) = |A|$, and if $Z = yB$ for some $B \in \Sigma^*$, set $C(X,y,Z) = XyB$ and $p(X,y,Z) = |X|$. Else, set $C(X,y,Z) = XyZ$ and $p(X,y,Z) = |X|$. Now, the compressed view of node~$v$ is $V(v) = C(L(v),c(v),R(v))$, and $c(v)$ is at position $q(v) = p(L(v),c(v),R(v))+1$ in $V(v)$. In other words, the left buffer, the current symbol and the right buffer are concatenated in a way that removes successive repetitions of the same symbol.

Finally, node~$v$ does subword matching on the view~$V(v)$. If $V(v)$ is equal to $|\_0\_|$ or $|\_0\_1\_1\_0\_|$, node~$v$ instructs other nodes to accept, moves to the $\yes$ state and halts. Otherwise, node~$v$ searches $V(v)$ for the subword $\_0\_1\_1\_0\_1\_0\_0\_1\_$ in all possible positions. Given a match, let the $i$th symbol of the subword be aligned with the $q(v)$th symbol of $V(v)$. If $i \in \set{1,9,17}$, set $c' = \_$. Otherwise, if $i\le 8$, set $c' = 0$, and if $i \ge 10$, set $c' = 1$. After that, node~$v$ performs the same procedure with the reversed subword $\_1\_0\_0\_1\_0\_1\_1\_0\_$ ---but now, if $i\le 8$, set $c' = 1$, and if $i \ge 10$, set $c' = 0$. If no matches could be found in $V(v)$, node~$v$ aborts. If several matches were found and they resulted in different values for $c'$, node~$v$ aborts. Otherwise, node~$v$ updates its current symbol $c(v)$ to be the unambiguous value $c'$. This concludes the phase; if not aborted, on the next round, a new phase starts from the beginning. See Example~\ref{ex:part3} for illustrations of the matching and substitution steps in a few cases.

\begin{example}
  Consider the execution of part III of algorithm~$\A$ in the following instances.
  \begin{enumerate}
    \item Path graph, yes-instance:
      \begin{align*}
        |\_0\_1\_1\_0\_1\_0\_0\_1&\_1\_0\_0\_1\_0\_1\_1\_0\_| \\
        &\Downarrow \quad(\text{unambiguous substitutions})\\
        |\_0000000\_1111111&\_1111111\_0000000\_| \\
        &\Downarrow \quad(\text{$V(v) = |\_0\_1\_1\_0\_|$})\\
        &\hspace{-0.7em}\text{accept}
      \end{align*}
    \item Path graph, no-instance:
      \begin{align*}
        |\_0\_1\_1\_0\_1\_0\_0&\_1\_1\_0\_1\_0\_0\_1\_| \\
        &\Downarrow \\
        |\_0000000\_11111&11\_ \ldots \\
        \ldots \_0&000000\_1111111\_| \\
        &\Downarrow \quad(\text{ambiguous substitutions})\\
        &\hspace{-0.6em}\text{abort}
      \end{align*}
    \item Cycle graph (the ends marked with $\ldots$ are connected circularly):
      \begin{align*}
        \ldots \_0\_1\_1\_0\_1\_0\_0\_1&\_1\_0\_0\_1\_0\_1\_1\_0\_ \ldots \\
        &\Downarrow \quad(\text{unambiguous substitutions})\\
        \ldots \_0000000\_1111111&\_1111111\_0000000\_ \ldots \\
        &\Downarrow \quad(\text{no matches})\\
        &\hspace{-0.6em}\text{abort}
      \end{align*}
  \end{enumerate}
  \label{ex:part3}
\end{example}

\subsection{Proof of correctness}

In this section we show that algorithm~$\A$ executes correctly and always halts with the desired output. Since parts I and II are quite trivial, we will start with part III of the algorithm.

We say that a word~$X = x_1 x_2 \ldots x_i$ is a \emph{compressed version} of word~$Y = y_1 y_2 \ldots y_j$ if $x_i \ne x_{i+1}$ for all $i \in \set{1,2,\ldots,i-1}$ and there exist a surjective \emph{compression mapping} $f \colon [j] \to [i]$ such that $f(1) = 1$, $f(k) \in \set{f(k-1),f(k-1)+1}$ for all $k \in \set{2,3,\ldots,j}$ and $y_k = x_{f(k)}$ for all $k \in [j]$.

\begin{lemma}
  After any node~$v$ has finished collecting the buffers, $V(v)$ is the compressed version of the actual input word in the neighbourhood of $v$.
  \label{lem:compress}
\end{lemma}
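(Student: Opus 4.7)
The plan is to prove Lemma~\ref{lem:compress} by induction on the number $k$ of buffer-filling rounds that a node has executed within the current phase. Let $\ldots, w_{-2}, w_{-1}, v = w_0, w_1, w_2, \ldots$ denote the nodes of the path in order, with a virtual neighbour carrying the symbol $|$ attached beyond each endpoint. Within a single phase every current symbol $c(u)$ is fixed (it is only updated at the very end of the phase), so it makes sense to prove the following invariant: after round $k \ge 2$, the left buffer $L(v)$ is the compressed version of the word $c(w_{-(k-1)}) c(w_{-(k-2)}) \cdots c(w_{-1})$, and symmetrically for $R(v)$.

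The base case $k = 2$ is immediate, since after the initial broadcast $L(v)$ contains exactly the symbol $c(w_{-1})$. For the inductive step, the message that $v$ receives from $w_{-1}$ in round $k+1$ is the one $w_{-1}$ sent in round $k$, namely $r(L(w_{-1}), c(w_{-1}))$. By the inductive hypothesis applied to $w_{-1}$, the buffer $L(w_{-1})$ is the compressed version of $c(w_{-k}) \cdots c(w_{-2})$, whose last letter is $c(w_{-2})$. By definition, $r$ appends $c(w_{-1})$ precisely when $c(w_{-1}) \ne c(w_{-2})$, which is exactly the effect on the compressed form of extending the underlying uncompressed word by $c(w_{-1})$. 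This gives the invariant for $L(v)$ after round $k+1$, and a symmetric argument handles $R(v)$. The end-of-path marker causes no trouble: because $|$ is distinct from every other letter of $\Sigma$, it is never absorbed by $r$ or $l$, and the induction carries through by treating the virtual endpoint as an ordinary node emitting~$|$.

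Once $v$ has finished filling both buffers, it remains to verify that $V(v) = C(L(v), c(v), R(v))$ is the compressed version of the full neighbourhood word, namely the left-side word followed by $c(v)$ followed by the right-side word. This reduces to checking that $C$ merges duplicates correctly at the two junctions around $c(v)$: the four clauses in the definition of $C$ correspond exactly to whether $c(v)$ matches the last letter of $L(v)$, the first letter of $R(v)$, both, or neither. In each case a one-line check confirms that the resulting word is in compressed form and represents the correct underlying sequence, and the accompanying value of $p$ places $c(v)$ at the right position. The conceptual point of the proof is that the pair $(l, r)$ implements on-the-fly compression as symbols hop along the path; the only mildly delicate piece is the bookkeeping in the inductive step, and in particular confirming that the virtual $|$-marker propagates through $r$ and $l$ exactly like a true letter of~$\Sigma$.
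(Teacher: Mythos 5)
Your proof is correct and follows essentially the same route as the paper's: induction on the number of buffer-filling rounds, with the key observation that $r$ (resp.\ $l$) appends the sender's current symbol exactly when it differs from the last (resp.\ first) buffer symbol, which is precisely the effect of compression, followed by a case check that $C$ merges the two buffers and $c(v)$ correctly. Your treatment of the virtual end-of-path marker and the four cases of $C$ is slightly more explicit than the paper's, but the argument is the same.
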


\begin{proof}
  We use induction on the number of rounds~$t$ after starting the phase. After the first round of the phase, each node~$v$ has received $L(v) = c(u)$ from its left neighbour~$u$ and $R(v) = c(w)$ from its right neighbour~$w$. Hence $L(v)$ and $R(v)$ are compressed versions of the left and right 1-neighbourhoods of $v$, respectively.

  Assume then that $L(u)$ and $R(w)$ are compressed versions of the left and right $(t-1)$-neighbourhoods of $u$ and $w$, respectively, and let $f_l$ and $f_r$ be the corresponding mappings. Consider now the definition of the algorithm. The definition of the mapping $r$ implies that what $v$ receives from the left in round~$t$ of the phase is $L(u)$---extended by $c(u)$ if and only if $c(u)$ differs from the last symbol of $L(u)$. If it does differ, we extend $f_l$ by defining $f_l(t) = |L(u)|+1$, otherwise $f_l(t) = |L(u)|$. Hence the new value of the buffer $L(v)$ is a compressed version of the left $t$-neighbourhood of $v$. The case of $R(v)$ is handled analogously.

  Suppose that $v$ has finished collecting the buffers. Now $L(v)$ and $R(v)$ are the compressed versions of the left and right $k$-neighbourhoods of $v$ for some $k$. Then, it follows from the definition of the function $C$ that an appropriate compression mapping can be formed and $V(v) = C(L(v),c(v),R(v))$ is the compressed version of the $k$-neighbourhood of $v$.
\end{proof}

We call the sequence of current symbols $c(u)$ in the graph a \emph{configuration} (in the case of a cycle graph, the sequence is infinite in both directions). A maximal sequence of adjacent nodes such that each node~$v$ has the same current symbol~$c(v)$ is called a \emph{block}. We sometimes identify a block with the subword consisting of the current symbols of the block nodes.

In the next two lemmas, we show how updating the current symbol locally in nodes results in global substitutions in the configuration.

\begin{lemma}
  Assume that in the current configuration, each maximal subword of the form $0+$ or $1+$ is of length $\ell$ and each maximal subword of the form $\_+$ is of length 1. If the algorithm is executed for one phase and no node aborts, in the resulting configuration the lengths are $4\ell+3$ and 1, respectively.
  More precisely, the execution of one phase always results in substitutions of the following kinds:
  \begin{align}
    \_0^\ell\_1^\ell\_1^\ell\_0^\ell\_1^\ell\_0^\ell\_0^\ell\_1^\ell\_ &\mapsto \_0^{4\ell+3}\_1^{4\ell+3}\_,\\
    \_1^\ell\_0^\ell\_0^\ell\_1^\ell\_0^\ell\_1^\ell\_1^\ell\_0^\ell\_ &\mapsto \_1^{4\ell+3}\_0^{4\ell+3}\_.
  \end{align}
  \label{lem:synchronous}
\end{lemma}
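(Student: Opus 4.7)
The plan is to first apply Lemma~\ref{lem:compress} to conclude that, once $v$ has finished collecting its buffers, $V(v)$ is exactly the compressed version of the actual input word in some wide-enough neighbourhood of $v$. The buffer-stopping condition---eight separators on each side of $c(v)$, or an end-of-path marker---gives at least $16$ characters of context on each side of $c(v)$, enough to accommodate any length-$17$ template window through $c(v)$. Combining this with the uniformity hypothesis that every $0$- or $1$-block has length $\ell$ and every separator block has length $1$, the view $V(v)$ is forced to be a strictly alternating sequence of separators and bits (possibly capped with $|$ at an endpoint).

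Second, I would verify the two claimed substitutions by a direct position-by-position unfolding of the template rule. For the template $\_0\_1\_1\_0\_1\_0\_0\_1\_$ the algorithm maps positions $\{1,9,17\}$ to $\_$, positions $\{2,\ldots,8\}$ to $0$, and positions $\{10,\ldots,16\}$ to $1$. Re-inflating the four bit slots at compressed positions $2,4,6,8$ back to $\ell$-blocks and counting the three separators between them gives $4\ell + 3$ contiguous nodes, all assigned $0$; the same count at positions $10,12,14,16$ gives $4\ell + 3$ contiguous nodes assigned $1$; and the three surviving separators at $1,9,17$ bracket them. This yields the substitution $\_0^\ell\_1^\ell\_1^\ell\_0^\ell\_1^\ell\_0^\ell\_0^\ell\_1^\ell\_ \mapsto \_0^{4\ell+3}\_1^{4\ell+3}\_$, that is, rule~(1); rule~(2) drops out identically from the reverse template.

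The main obstacle is the global alignment: showing that under the no-abort hypothesis the compressed bit sequence $b_1 \ldots b_m$ of the configuration decomposes into consecutive length-$8$ blocks each equal to $T_3 = 01101001$ or to $\mathcal{C}(T_3) = 10010110$, so that the local substitutions fire at aligned non-overlapping positions. In the path case the leftmost $|$ pins the alignment at $b_1$: since the template contains no $|$, the only template window through $b_1$ must start at the first separator after $|$, which forces $b_1 \ldots b_8 \in \{T_3,\, \mathcal{C}(T_3)\}$. The alignment is then propagated rightwards by a finite case check over the seven overlap offsets $d \in \{1,\ldots,7\}$: two windows matching templates with offset $d$ would either be string-level inconsistent or place a shared bit simultaneously in the ``first half'' $\{2,4,6,8\}$ of one template and the ``second half'' $\{10,12,14,16\}$ of the other, forcing conflicting values of $c'$ at that bit and hence an abort---contradicting the hypothesis. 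This finite check, exploiting the rigidity of the Thue--Morse word, is the combinatorial heart of the argument. Once the alignment is established, substitutions~(1) and (2) apply block-by-block and yield the claimed globally uniform $(4\ell+3,\,1)$ block structure.
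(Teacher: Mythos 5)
Your use of Lemma~\ref{lem:compress} and the position-by-position unfolding of the two templates are both fine and coincide with what the paper does. The gap is exactly in the part you call the combinatorial heart: the claim that, absent an abort, the compressed bit sequence decomposes into \emph{aligned, non-overlapping} length-$8$ blocks from $\set{T_3,\mathcal{C}(T_3)}$, to be proved by checking that every overlap offset $d\in\set{1,\dots,7}$ forces a string mismatch or conflicting values of~$c'$. That finite check fails at $d=4$: since $T_3=01101001$ ends with $1001$ and $\mathcal{C}(T_3)=10010110$ begins with $1001$, the bit word $011010010110$ carries a $P_1$-match on bits $1$--$8$ and a $P_2$-match on bits $5$--$12$. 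On the four shared bits the $P_1$-match puts them at template positions $10,12,14,16$, giving $c'=1$, while the $P_2$-match puts them at positions $2,4,6,8$, giving $c'=1$ as well (the two templates assign \emph{opposite} values to their two halves, which is precisely the case your ``first half of one, second half of the other'' argument overlooks); the flanking separators land at positions $9$ and $17$ versus $1$ and $9$, so both matches give $c'=\_$ there. Concretely, the path instance with input word $|\_0\_1\_1\_0\_1\_0\_0\_1\_0\_1\_1\_0\_|$ and $\ell=1$ survives the phase with only these two overlapping matches and is rewritten to $|\_0^{7}\_1^{7}\_0^{7}\_|$. Twelve bits cannot be tiled by length-$8$ blocks, so your intermediate claim is false, not merely unproven --- and the endpoint-pinning step would in any case be unavailable on cycles, where Lemma~\ref{lem:reject} also invokes this lemma. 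Note that the lemma's conclusion still holds in this example: each of the two (overlapping) matches performs a substitution of kind (1) or (2), and the resulting blocks have length $4\ell+3=7$.

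The repair is to drop the global-alignment step altogether; it is stronger than what is needed. The paper argues purely locally: fix any node $u$ and any match of $P_1$ (or $P_2$) in $V(u)$ at position $i$. By Lemma~\ref{lem:compress} this match is the compressed image of an actual neighbourhood $N$, and because every buffer is filled until it holds eight separators, each node in the $17$ blocks spanned by $N$ has a view containing a compressed version of a neighbourhood containing $N$, hence sees the \emph{same} match at the correspondingly shifted position. The no-abort hypothesis then forces each such node's unambiguous $c'$ to equal the value dictated by that single match, so the whole window is rewritten according to substitution (1) or (2). Consistency between distinct, possibly overlapping matches is exactly what ``no node aborts'' buys you for free; the surviving separators (template positions $1,9,17$) then delimit the new blocks, each consisting of four old bit-blocks and three old separators, i.e.\ of length $4\ell+3$.
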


\begin{proof}
  Since no node aborts, each node~$v$ is able to find an unambiguous new value for $c(v)$. Consider an arbitrary node~$u$. Suppose that the pattern $P_1 = \_0\_1\_1\_0\_1\_0\_0\_1\_$ matches $V(u)$ so that the $i$th symbol of the pattern is aligned with the $q(v)$th symbol of $V(u)$. Now it follows from Lemma~\ref{lem:compress} that $P_1$ is the compressed version of an actual neighbourhood~$N$ of $u$. Due to the assumption, for each $j$ such that the $j$th symbol of $P_1$ is $0$ or $1$, in the neighbourhood there are exactly $\ell$ consecutive symbols $0$ or $1$, respectively, that are mapped to the $j$th symbol of $P_1$ by the compression mapping~$f$.

  Let us consider the case $i = 6$ as an example. Now $c(u) = 1$. As the buffers are gathered until each of them contain eight separators $\_$, the next 5 blocks to the left from the block of $u$, as well as the next 11 blocks to the right, gather views that are compressed versions of a neighbourhood containing~$N$. Hence $P_1$ matches also their views. For example, for all nodes $w$ in the block two steps left from the block of $u$, $P_1$ matches $V(w)$ at position $i-2 = 4$. If follows from the definition of the algorithm that the new value for $c(u)$, as well as for $c(w)$, where $w$ is in the next 4 blocks to the left from $u$ or 2 blocks to the right from $u$, is 0. The new value for the node in the 5th block left from $u$ as well as 3rd block right from $u$ is $\_$. Thus, after the phase, node~$u$ will be part of a $0$-block of length $\ell+1+\ell+1+\ell+1+\ell = 4\ell+3$.

  The cases for all other values of $i$, as for as the matching the reverse pattern $P_2$, are analogous.
\end{proof}

We call a word $\_x_1^i \_x_2^i \_ \ldots \_x_p^i \_$ a \emph{padded Thue--Morse word of length $p$} if $x_1 x_2 \ldots x_p$ is a prefix of the Thue--Morse sequence.

\begin{lemma}
  Let $W$ be a subword of a configuration $C$ at the beginning of a phase. Let $k \ge 3$. If $W$ is a (complement of a) padded Thue--Morse word of length $2^k$ and the algorithm is executed for one phase on $C$ without aborting, the subword $W$ is transformed to a (complement of a) padded Thue--Morse word of length $2^{k-2}$.
  \label{lem:padded}
\end{lemma}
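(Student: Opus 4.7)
My plan is to reduce the statement to a purely combinatorial property of the Thue--Morse sequence and then apply Lemma~\ref{lem:synchronous} in parallel across the subword~$W$. The crucial algebraic observation I will establish is that, for $k \ge 3$, the word $T_k$ factors as a concatenation of $2^{k-3}$ length-$8$ blocks, each equal to either $T_3 = 01101001$ or its complement $\mathcal{C}(T_3) = 10010110$, and that the sequence of choices reads exactly $T_{k-3}$ (with $0$ standing for $T_3$ and $1$ for $\mathcal{C}(T_3)$). This follows directly from Definition~\ref{def:tm-rec}: if $\sigma$ denotes the morphism $0 \mapsto 01$, $1 \mapsto 10$, then a straightforward computation gives $\sigma^3(0) = T_3$ and $\sigma^3(1) = \mathcal{C}(T_3)$, and since $\sigma$ is a morphism we have $T_k = \sigma^k(0) = \sigma^3(T_{k-3})$.

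Carrying this over to padded words, I would write $W = \_ x_1^\ell \_ x_2^\ell \_ \ldots \_ x_{2^k}^\ell \_$ and split it into $2^{k-3}$ chunks, the $j$th being $\_ x_{8(j-1)+1}^\ell \_ \ldots \_ x_{8j}^\ell \_$, where consecutive chunks share exactly the separator sitting on their common boundary. By the observation above, each chunk is precisely one of the two left-hand-side patterns appearing in Lemma~\ref{lem:synchronous}. Since the phase completes without aborts, Lemma~\ref{lem:synchronous} guarantees that every node in $W$ updates according to substitution~$(1)$ or~$(2)$, so each chunk is rewritten simultaneously to $\_ 0^{4\ell+3} \_ 1^{4\ell+3} \_$ or $\_ 1^{4\ell+3} \_ 0^{4\ell+3} \_$ respectively. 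Concatenating these and identifying the shared boundary separators yields a word with $2^{k-2}$ content blocks of length $4\ell+3$ interspersed with single separators.

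The final step is to identify this as a padded Thue--Morse word of length $2^{k-2}$: the sequence of new block symbols is obtained from $T_{k-3}$ by replacing each $0$ with the pair $01$ and each $1$ with $10$, which by Definition~\ref{def:tm-rec} is exactly $\sigma(T_{k-3}) = T_{k-2}$. For the case where $W$ is the complement of a padded Thue--Morse word, I would simply note that complementation commutes with $\sigma$ and swaps the two substitutions of Lemma~\ref{lem:synchronous}, so the same chain of equalities applies with $T$ replaced by $\mathcal{C}(T)$ throughout. The step I expect to be the most delicate is the boundary bookkeeping in the chunk decomposition: I need to confirm that the single-separator overlap between adjacent chunks is preserved after substitution, so that gluing the rewritten chunks back together neither duplicates nor omits any separator. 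This should be transparent from the fact that both the input and the output patterns of each substitution begin and end with exactly one $\_$, but it is the one place where I would write things out explicitly to avoid an off-by-one slip.
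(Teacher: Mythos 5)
Your proof is correct, but it takes a genuinely different route from the paper's. The paper proves the lemma by induction on $k$: the base case $k=3$ is a single application of Lemma~\ref{lem:synchronous}, and the inductive step splits $W = W_1\_W_2$ via Definition~\ref{def:tm-compl} ($T_k = T_{k-1}\mathcal{C}(T_{k-1})$) into a padded Thue--Morse word and a complement of one, each of length $2^{k-1}$, applies the hypothesis to both halves, and reassembles. You instead avoid induction entirely by using the morphism picture of Definition~\ref{def:tm-rec}: the identity $T_k = \sigma^3(T_{k-3})$ with $\sigma^3(0)=T_3$ and $\sigma^3(1)=\mathcal{C}(T_3)$ decomposes $W$ into $2^{k-3}$ length-$8$ chunks, each being exactly a left-hand side of one of the two substitutions of Lemma~\ref{lem:synchronous}, and the output is identified as $T_{k-2}$ via $\sigma(T_{k-3}) = T_{k-2}$. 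Both arguments lean on the same two facts (Lemma~\ref{lem:synchronous} plus the self-similarity of the Thue--Morse sequence, in one of its two equivalent forms, whose equivalence the paper already records in Section~\ref{sec:thuemorse}). Your version has the advantage of invoking Lemma~\ref{lem:synchronous} exactly once, in parallel across all chunks, which makes it more transparent that every node of $W$ is covered by precisely one substitution; the paper's version is shorter to write but must thread the ``complement of'' qualifier through the induction since the two halves are of opposite type. Your flagged concern about the chunk boundaries resolves exactly as you anticipate: adjacent chunks overlap in a single separator, and since positions $1$ and $17$ of either pattern are both assigned $\_$ by the update rule, the shared node receives consistent instructions from both matches and the gluing neither duplicates nor drops a separator. (The paper's inductive step has the same single-separator gluing issue at the midpoint $W_1\_W_2$ and handles it equally implicitly.)
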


\begin{proof}
  We use induction on $k$. If $k = 3$, we have $W = \_0^i\_1^i\_1^i\_0^i\_1^i\_0^i\_0^i\_1^i\_$ or $W = \_1^i\_0^i\_0^i\_1^i\_0^i\_1^i\_1^i\_0^i\_$ for some $i$. Then Lemma~\ref{lem:synchronous} implies that $W$ is transformed to $\_0^j\_1^j\_$ or $\_1^j\_0^j\_$, respectively, where $j = 4i+3$. Hence the claim holds for $k = 3$.

  Suppose then that $k > 3$ and the claim holds for each subword that is a (complement of~a) padded Thue--Morse word of length $2^{k-1}$. Let $W$ be a padded Thue--Morse word of length~$2^k$. Now the definition of the Thue--Morse sequence implies that we can write $W = W_1\_W_2$, where $W_1\_$ is a padded Thue--Morse word of length $2^{k-1}$ and $\_W_2$ is a complement of a padded Thue--Morse word of length $2^{k-1}$. By the inductive hypothesis, $W_1\_$ is transformed to a padded Thue--Morse word of length $2^{k-3}$ and $\_W_2$ is transformed to a complement of a padded Thue--Morse word of length $2^{k-3}$. Now the definition of the Thue--Morse sequence again implies that $W = W_1\_W_2$ is transformed to a padded Thue--Morse word of length $2^{k-2}$. The case where $W$ is a complement of a padded Thue--Morse word is completely analogous.
\end{proof}

Now we are ready to aggregate our previous lemmas to establish that algorithm~$\A$ actually works correctly.

\begin{lemma}
  In a yes-instance, each node eventually halts and outputs $\yes$.
  \label{lem:accept}
\end{lemma}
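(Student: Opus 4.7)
The plan is to verify first that Parts~I and~II never abort on a yes-instance, and then to argue by induction that Part~III terminates with every node in the $\yes$ state.  For Parts~I and~II: since the graph is a path and $f_1$ gives a consistent orientation, every radius-$1$ neighbourhood (counting the virtual $|$ at an endpoint) contains three distinct orientation symbols, so Part~I passes; and since every valid word is a padded Thue--Morse word whose digits alternate with separators and whose ends are $\_0\_$ (because each $T_{2i}$ is a palindrome beginning with $0$), a short case analysis confirms that every radius-$1$ neighbourhood lies in the allowed set and Part~II passes.

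For Part~III I would strengthen the induction hypothesis to: whenever the current configuration is a padded Thue--Morse word of length $2^{2i}$ with uniform block length $\ell \ge 1$, every node eventually outputs $\yes$.  The induction is on $i$.  In the base cases $i\in\{0,1\}$ the configuration is $\_0^\ell\_$ or $\_0^\ell\_1^\ell\_1^\ell\_0^\ell\_$, so it contains at most four separators in total; hence the buffer-filling of every node terminates by reaching the $|$ marker on both sides, and a direct computation gives $V(v)\in\{|\_0\_|,\; |\_0\_1\_1\_0\_|\}$---exactly the patterns triggering the special accept test.  The accept message then propagates to any remaining nodes.

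For the inductive step $i\ge 2$ the plan is to apply Lemma~\ref{lem:padded} with $k=2i$ to the entire digit word of the configuration, which produces in one phase a padded Thue--Morse word of length $2^{2i-2}$ with block length $4\ell+3$; the inductive hypothesis then finishes the proof.  What remains is the no-abort condition.  By Lemma~\ref{lem:compress}, each view $V(v)$ equals the compressed raw window around $v$; after compression this window is itself a padded Thue--Morse word with virtual block length $1$ whose digit part is a contiguous subword of $T_{2i}$, possibly flanked by $|$ markers.  By the recursive structure of Definition~\ref{def:tm-compl}, the word $T_{2i}$ decomposes into $2^{2i-3}$ aligned macroblocks of length $8$, each equal to $T_3$ or to its complement $T_3^c$, and the overlap-free property of the Thue--Morse sequence guarantees that $T_3$ and $T_3^c$ occur as subwords of $T_{2i}$ only at these $8$-aligned positions.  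Hence every match of $P_1$ or $P_2$ inside $V(v)$ aligns with a macroblock, and a short case check of the substitution rule shows that whenever two such matches both cover the position $q(v)$---which happens only at a separator between two adjacent macroblocks---they assign the same value of $c'$.  Consequently $c'$ is unambiguous at every node, no node aborts, and Lemma~\ref{lem:padded} yields the configuration required for the next inductive step.

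The main obstacle is the non-ambiguity claim above.  Establishing that $P_1$ and $P_2$ match $V(v)$ only at $8$-aligned macroblocks, and that the substitution rule remains consistent across any overlapping matches at macroblock boundaries, rests on the combinatorial structure of the Thue--Morse sequence (its overlap-free property) together with a careful check of the rule at the boundary positions; the rest of the argument consists of routine verifications.
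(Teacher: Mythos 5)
Your overall route is the same as the paper's: check that parts I and II pass on a yes-instance, observe that the configuration at the start of part III is a padded Thue--Morse word, and then iterate Lemma~\ref{lem:padded} down to the base configurations $|\_0\_|$ and $|\_0\_1\_1\_0\_|$. Making the induction on the exponent explicit and carrying the block length $\ell$ through it is a reasonable way to organise what the paper compresses into one sentence, and you are right to single out the no-abort/unambiguity condition as the real content: the paper merely asserts that ``the new current symbol will be unambiguous for each node on each phase'', so this is exactly the step that needs an argument.

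However, the combinatorial claim on which you base that argument is false. It is not true that $T_3 = 01101001$ and its complement occur in $T_{2i}$ only at positions divisible by $8$: for instance $t_{12}t_{13}\ldots t_{19} = 01101001 = T_3$, an occurrence at position $12$. (What the substitution structure actually yields is that occurrences of $T_3$ and $T_3^c$ lie at positions divisible by $4$.) Consequently a node can see matches of $P_1$ and $P_2$ whose digit windows are offset by four digits and which both cover $q(v)$ well inside a macroblock, not only ``at a separator between two adjacent macroblocks'' as you claim, so the case check you describe does not cover the situation that actually arises. The fact you need is that such offset-by-four matches still assign the same $c'$; this does hold --- an offset $P_1$ match has digit window $0110\,1001$, which forces the two aligned macroblocks it straddles to both equal $T_3^c = 10010110$, and the values $0$ and $1$ assigned by the two halves of the offset $P_1$ match agree with those assigned by the $i\ge 10$ half and the $i\le 8$ half of the corresponding aligned $P_2$ matches --- but it is a different verification from the one you propose, and the ``only at $8$-aligned positions'' premise must be dropped. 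With that repair the rest of your argument goes through and is in fact more complete than the proof given in the paper.
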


\begin{proof}
  Let $(G,f)$ be a yes-instance of $\ThueMorse$. By definition, algorithm~$\A$ executes parts I and II successfully. Notice then that since the input word is valid, the configuration at the beginning of part III is actually a padded Thue--Morse word. If the input word is either $\_0\_$ or $\_0\_1\_1\_0$, the nodes move immediately to the $\yes$ state and halt. Otherwise, the input word is a padded Thue--Morse word of length at least $16$, and by iterating Lemma~\ref{lem:padded} we obtain a shorter padded Thue--Morse word after every phase. Note that the new current symbol will be unambiguous for each node on each phase. Eventually the configuration will match with $\_0\_1\_1\_0\_$, and the instance will get accepted.
\end{proof}

\begin{lemma}
  In a no-instance, each node eventually halts and outputs $\no$.
  \label{lem:reject}
\end{lemma}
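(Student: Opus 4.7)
The plan is to classify the ways $(G,f)$ can fail to be a yes-instance of $\ThueMorse$ and verify that each one triggers an abort. If $f_1$ is not a consistent orientation, or $G$ has a node of degree outside $\set{1,2}$, then some node in Part~I does not see three distinct orientation symbols among $\set{\alpha,\beta,\gamma,|}$ in its radius-$1$ neighbourhood and aborts. If $f_2$ is not locally valid, some node's radius-$1$ neighbourhood in Part~II falls outside the allowed set and aborts. Hence I may assume that Part~III begins with either a path or a cycle, carrying a consistent orientation and a locally valid word.

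For the \emph{path case}, I would induct strongly on the number $k$ of non-separator positions of the current configuration. By Lemma~\ref{lem:synchronous}, successful phases preserve uniformity of block length (starting at $\ell=1$ and evolving to $4\ell+3$), so the induction stays in the uniform regime. In the base cases $k \le 7$ the compressed view has length at most $17$; since it begins with the end marker $|$, the length-$17$ patterns $P_1$ and $P_2$ cannot match at their only possible starting position, and the accept views $|\_0\_|$ and $|\_0\_1\_1\_0\_|$ match only in the yes-instance exceptions $(k,w)=(1,0)$ and $(k,w)=(4,0110)$, so every node in these base cases aborts. For the inductive step $k \ge 8$ either the phase aborts directly, or by Lemma~\ref{lem:synchronous} it reduces the configuration via the inverse Thue--Morse substitution $0110\mapsto 0$, $1001\mapsto 1$ to a uniform locally-valid configuration with $k/4$ non-separator positions. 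Because the substitution $0\mapsto 0110$, $1\mapsto 1001$ underlying Definition~\ref{def:tm-rec} is injective, the reduced configuration is a padded Thue--Morse word iff the original was; so the reduced word is still a no-instance and the induction hypothesis finishes the case.

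For the \emph{cycle case}, both accept views contain $|$, which never appears in a cycle; hence no node can ever output $\yes$, and the only halting route is $\no$ after an abort. To show that some abort must eventually occur, I would use a block-count potential: by Lemma~\ref{lem:synchronous}, each non-aborting phase divides the number of $0/1$-blocks by $4$, so only $O(\log |V|)$ such phases can take place. Once the block count has dropped below $8$, every compressed view becomes a short alternating pattern whose symbol subword $0101\cdots$ or $1010\cdots$ contains neither $01101001$ nor $10010110$, and every node aborts in the next phase.

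The main obstacle will be checking the inductive step cleanly: confirming that, even though Lemma~\ref{lem:synchronous} only prescribes substitutions within each 8-block window, a successful phase really produces a globally coherent shorter configuration. I plan to lean on Lemma~\ref{lem:compress} -- that $V(v)$ truly is the compressed neighbourhood of $v$ -- so that the alignment position of $P_1$ or $P_2$ within $V(v)$ is determined by the macroscopic position of $v$'s block. This forces the substitutions chosen at adjacent blocks to be mutually consistent with a single global tiling of the block sequence by the groups $01101001$ and $10010110$, so the local substitutions assemble into the inverse Thue--Morse substitution on the whole configuration; the injectivity argument then closes the induction.
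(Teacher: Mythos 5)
Your route for the path case is genuinely different from the paper's. The paper argues by contradiction, comparing the execution on the no-instance with the executions on yes-instances of the same or adjacent sizes and deriving a clash between phase counts and block growth rates; you instead argue directly that every non-aborting phase realises the inverse substitution $0110\mapsto 0$, $1001\mapsto 1$ on the sequence of blocks and close an induction using the injectivity of that substitution. This is arguably more self-contained, and the ``global coherence'' issue you single out as the main obstacle is essentially the content of Lemma~\ref{lem:synchronous}, so leaning on it together with Lemma~\ref{lem:compress} is legitimate. Your handling of parts I and II, and the observation that a cycle can never produce an accepting view because the marker $|$ never occurs there, agree with the paper.

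Two concrete defects remain. First, your induction measure is wrong as stated: the number of non-separator positions does not drop by a factor of four in a successful phase---it \emph{grows}, since each group $\_0^\ell\_1^\ell\_1^\ell\_0^\ell\_1^\ell\_0^\ell\_0^\ell\_1^\ell\_$ carries $8\ell$ non-separator positions while its image $\_0^{4\ell+3}\_1^{4\ell+3}\_$ carries $8\ell+6$. The quantity that decreases is the number of $0/1$-\emph{blocks}, which goes from $b$ to $b/4$; you should induct on that, and also note that $4\nmid b$ already forces an abort because the group boundaries are pinned at both ends of the path. Second, in the cycle case your justification is false: adjacent blocks are separated by $\_$ and may carry \emph{equal} symbols (e.g.\ block word $0110$), so the symbol subword of a view need not be alternating, and your threshold of $8$ is not enough---$01101001$ has period $6$, so it does occur in the views of a $6$-block cycle with block word $011010$. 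The paper's argument waits until $b\le 4$, where the view is $b$-periodic and $01101001$ admits no period $\le 4$, hence no match exists; for $5\le b\le 7$ you need instead that $4\nmid b$ forces ambiguous (or absent) matches. Both problems are local and repairable, but as written the induction does not terminate and the cycle case is unjustified.
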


\begin{proof}
  Let $(G,f)$ be a no-instance of $\ThueMorse$. By assumption, $G$ is finite and connected. If $G$ contains a node of degree higher than 2, algorithm~$\A$ aborts in part~I. Hence we can assume that $G$ has maximum degree at most two. It follows immediately from Lemma~\ref{lem:synchronous} that algorithm~$\A$ halts on $(G,f)$: since the size of blocks of $0$'s or $1$'s grows in each phase, we will eventually run out of nodes.

  Graph~$G$ is either a path or a cycle graph. If $\A$ rejects in part I or part II, we are done. Hence, we can assume that the input~$f$ defines a consistent orientation and each 1-neighbourhood is of the correct form---that is, every second symbol on the input word is a separator~$\_$. Now it follows from Lemma~\ref{lem:synchronous} that the computation proceeds synchronously: in part III, each node starts a new phase in the same round.

  Consider the case that $G$ is a cycle graph. Due to Lemma~\ref{lem:synchronous}, the number of blocks decreases after each phase. Eventually, if no node rejects, the execution reaches a configuration with $b \le 4$ blocks of $0$'s and $1$'s. Then, each node sees the same sequence of $b$ blocks repeating in its view. It follows that neither pattern $P_1$ or $P_2$ matches and the node rejects.

  Assume then that $G$ is a path graph. If there exists a $\yes$-instance $(G',f')$ with the same number of nodes as $G$ has, we proceed as follows. Suppose for a contradiction that $(G,f)$ gets accepted. Then there is a smallest $i$ such that before the $i$th phase, the configurations $C$ on $(G,f)$ and $C'$ on $(G',f')$, respectively, are different, but after the $i$th phase, they are identical, $C''$. It follows that $C''$ contains a subword of the form $\_0^i\_1^i\_$ for some $i$, such that $C$ and $C'$ differ on a position overlapping with the subword. But this is a contradiction, as it follows from Lemma~\ref{lem:synchronous} that $C$ and $C'$ cannot differ on such a position.

  Finally, consider the case where $(G,f)$ is a no-instance such that there does not exist a yes-instance with the same number of nodes. Suppose again for a contradiction that $(G,f)$ gets accepted. Let $(G',f')$ be the largest yes-instance no larger than $(G,f)$ and let $(G'',f'')$ be the yes-instance that is one step larger from $(G',f')$. Lemma~\ref{lem:padded} implies that the algorithm~$\A$ needs exactly one phase more on $(G'',f'')$ than on $(G',f')$. The number of phases on $(G,f)$ equals either the number of phases on $(G',f')$ or on $(G'',f'')$. But either case is a contradiction, since the instances have a different amount of blocks in the beginning, and Lemma~\ref{lem:synchronous} implies that the size of blocks grows at a fixed rate.
\end{proof}

The last thing left to do is analysing the complexity of the algorithm. This is taken care of in the following lemmas.

\begin{lemma}
  The space usage of algorithm~$\A$ is in $O(1)$.
  \label{lem:space}
\end{lemma}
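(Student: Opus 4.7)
The plan is to show that every state reachable by any node during any execution of $\A$ belongs to a finite set whose cardinality depends only on $\A$, not on the input graph. I would decompose the state of a node into its named components and bound each component separately.

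The easy components are: the local input $f(v) \in I$, which comes from a fixed finite set; the (recorded) degree, which is in $\{1,2\}$ for any node that survives part~I (higher-degree nodes abort and move immediately to a halting state); the current symbol $c(v) \in \Sigma = \{0,1,\_,|\}$; and a small control flag tracking the algorithm part (I, II, III) and, within part~III, whether the node is still filling buffers, computing and matching the view $V(v)$, or updating $c(v)$. All these quantities range over sets of size determined by $\A$ alone, hence constant.

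The main obstacle is bounding the two buffers $L(v), R(v) \in \Sigma^*$. I would use two structural invariants that are built into the algorithm. First, by the definitions of the helper functions $l$ and $r$, the buffers are always \emph{compressed}: no two consecutive symbols in $L(v)$ or $R(v)$ are equal. Second, the buffer-filling loop terminates as soon as each buffer contains either $8$ copies of the separator $\_$ or the end-marker $|$ (which can occur at most once per buffer, since it originates from the unique virtual neighbour at a path endpoint). Combining alternation with these stopping criteria bounds the length of each buffer by a small absolute constant (about $17$ symbols), so each buffer ranges over $\Sigma^{\le 17}$, a constant-size set.

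Taking the product of the component ranges, the total number of distinct states any node can visit during any execution is a constant depending only on $\A$. Therefore $\lceil \log |\{s \in S : s = x_r(v) \text{ for some } r, v\}| \rceil = O(1)$, which is the claimed space bound.
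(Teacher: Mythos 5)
Your proposal is correct and follows essentially the same route as the paper's own proof: the crux in both is that the buffers store only a compressed view (no two consecutive equal symbols, by the definitions of $l$ and $r$) and that buffer filling stops after a constant number of blocks (eight separators or an end-marker), so each buffer ranges over a constant-size set. Your version is somewhat more explicit about enumerating the remaining state components, but the key idea is identical.
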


\begin{proof}
  Parts I and II of algorithm~$\A$ clearly use only a constant number of states. Consider then part III. When gathering the buffers, consecutive blocks of symbols in the neighbourhood are represented by only one symbol, and only a constant amount of blocks are gathered (eight separator symbols $\_$ in each buffer). Hence a constant number of states is enough to represent the contents of the buffers. Furthermore, the buffers get erased after each phase has completed. Thus, algorithm~$\A$ can be implemented using a constant number of states, independent of the size of the input.
\end{proof}

\begin{lemma}
  The running time of algorithm~$\A$ is in $O(n)$.
  \label{lem:time}
\end{lemma}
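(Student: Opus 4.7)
The plan is to decompose the running time into contributions from parts I, II, and III (plus any abort propagation) and bound each by $O(n)$. Parts I and II inspect only the radius-$1$ neighbourhood of each node, so they finish in $O(1)$ rounds. An abort, if it occurs anywhere, propagates along the graph and causes every node to halt within $O(n)$ further rounds. So the remaining task is to bound the number of rounds spent in phases of part III before the first abort or the first transition to $\yes$.

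For part III I would analyse one phase at a time. Set $\ell_0 = 1$ and $\ell_{k+1} = 4\ell_k + 3$, so $\ell_k = \Theta(4^k)$. Because parts I and II succeed, the configuration at the beginning of phase $0$ has every $0+$ and $1+$ block of length exactly $\ell_0$ and every separator block of length $1$; by induction using Lemma~\ref{lem:synchronous}, so long as no node has aborted, the same block-length invariant with parameter $\ell_k$ holds at the start of phase $k$. In such a configuration the separators in the actual word sit $\ell_k + 1$ positions apart, so by Lemma~\ref{lem:compress} each buffer accumulates its eighth separator (or reaches an end-marker) after $O(\ell_k)$ rounds of message passing. The subword matching and the single state update add only $O(1)$ rounds on top, so phase $k$ finishes in $O(\ell_k)$ rounds altogether.

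Next I would bound the number $K$ of phases that can complete before a halting event. Since block lengths multiply by more than $4$ per phase and can never exceed $n$, we have $K = O(\log n)$ and $\ell_K = O(n)$. Summing the geometric progression,
\[
  \sum_{k=0}^{K} O(\ell_k) \;=\; O(\ell_K) \;=\; O(n),
\]
which, combined with the $O(1)$ contribution of parts I and II and an $O(n)$ allowance for abort propagation, yields the claimed $O(n)$ total running time.

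The main obstacle I expect is the per-phase bound in no-instances: one has to carry the block-length invariant through every non-aborting phase without assuming anything Thue--Morse-like about the configuration. Fortunately Lemma~\ref{lem:synchronous} already packages exactly the synchronous, uniform-length behaviour needed, so once that invariant is in hand the rest is a geometric sum.
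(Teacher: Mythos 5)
Your proposal is correct and follows essentially the same route as the paper: each phase of part III costs a number of rounds proportional to the current block length, block lengths grow by the recurrence $\ell \mapsto 4\ell+3$, only $O(\log n)$ phases can occur, and the resulting geometric sum is dominated by its last term, which is $O(n)$. Your additional bookkeeping for parts I--II and for abort propagation is a welcome bit of extra care, but it does not change the substance of the argument.
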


\begin{proof}
  Executing each phase of part III of the algorithm can be done in $8i+8 = 8(i+1)$ rounds, where $i$ is the size of the blocks at the start of the phase. Recall that after each phase, the size of the blocks grows from $\ell$ to $4\ell+3$. Note also that $\frac{1}{2}\log n$ phases are enough to grow the block size so large that the algorithm has to either accept or reject. It follows that
  \[
    8(1+1) + 8(7+1) + 8(31+1) + \cdots + 8(2^{2(\frac{1}{2}\log n)-1)}) \le 8n
  \]
  is an upper bound for the running time.
\end{proof}

On the other hand, $\Omega(n)$ rounds are clearly necessary to solve $\ThueMorse$: node at the end of the path has to receive information from the other end of the path to be able to verify that the instance is actually a yes-instance. This concludes the proof of Theorem~\ref{thm:degree2}.

\section{Conclusions}\label{sec:concl}

We identified a model of distributed computing and a class of graphs, where the question on the existence of constant-space, non-constant-time algorithms is interesting and non-trivial. The answer turned out to be positive. This opens up the way to study constant space further---we can ask, for example, what other time complexities besides $\Theta(\log n)$ and $\Theta(n)$ can possibly be found and in which graph classes.

\section*{Acknowledgements}

We have discussed this work with numerous people, including at least Juho Hirvonen, Janne~H. Korhonen, Antti Kuusisto, Joel Rybicki and Przemysław Uznański. Many thanks to all of them, as well as to others that we may have forgotten.

\def\UrlFont{\sf\footnotesize}
\bibliographystyle{plain}
\bibliography{constant-space}

\end{document}